\newcommand{\R}[1]{\ensuremath{\mathbb{R}^{#1}}}
\newcommand{\I}{\ensuremath{\mathbb{I}}}
\DeclareMathAlphabet{\mymathbb}{U}{BOONDOX-ds}{m}{n}
\newcommand{\Zero}{\ensuremath{\mymathbb{0}}}
\newcommand{\upright}[2]{\ensuremath{#1_{\mathrm{#2}}}}
\newcommand{\T}{\ensuremath{^{\mathrm{T}}}}
\newtheorem{assumption}{Assumption} 
\newtheorem{theorem}{Theorem} 
\newtheorem{proof}{Proof} %
\pgfplotsset{compat=newest}
\begin{document}
\begin{frontmatter}

\title{Model Predictive Control with Gaussian-Process-Supported Dynamical Constraints for Autonomous Vehicles}
\thanks[footnoteinfo]{The authors acknowledge support by the research training group DFG-GRK 2297 of the German Research foundation 'Deutsche Forschungsgemeinschaft'.}

\author[First]{Johanna Bethge} 
\author[Second]{Maik Pfefferkorn} 
\author[First]{Alexander Rose}
\author[Third]{Jan Peters}
\author[First]{Rolf Findeisen}

\address[First]{Control and Cyber-Physical Systems Laboratory, Technical University of Darmstadt, Darmstadt, Germany \newline \mbox{(e-mail:~\{johanna.bethge,~alexander.rose,~rolf.findeisen\}@iat.tu-darmstadt.de)}}
\address[Second]{Systems Theory and Automatic Control Laboratory, Otto-von-Guericke University Magdeburg, Magdeburg, Germany (e-mail: maik.pfefferkorn@ovgu.de)}
\address[Third]{Intelligent Autonomous Systems Laboratory, Technical University of Darmstadt, Darmstadt, Germany (e-mail:jan.peters@tu-darmstadt.de)}

\begin{abstract}                

We propose a model predictive control approach for autonomous vehicles  that exploits learned Gaussian processes for predicting human driving behavior.  
The proposed approach employs the uncertainty about the GP’s prediction to achieve safety.
A multi-mode predictive control approach considers the possible intentions of the human drivers. 
While the intentions are represented by different Gaussian processes, their probabilities foreseen in the observed behaviors are determined by a suitable online classification. 
Intentions below a certain probability threshold are neglected to improve performance. 
The proposed multi-mode model predictive control approach with Gaussian process regression support enables repeated feasibility and probabilistic constraint satisfaction with high probability. 
The approach is underlined in simulation, considering real-world measurements for training the Gaussian processes.

\end{abstract}

\begin{keyword}
Gaussian Process based Identification and Control; Nonlinear Model Predictive Control; Gaussian Processes; Robust Control; Intelligent Autonomous Vehicles
\end{keyword}

\end{frontmatter}

\section{Introduction}

Safe autonomous driving at intersections in mixed traffic (involving both human-driven and autonomous vehicles) remains challenging -- especially as predicting human driving behavior is nontrivial due to the wide range of individual behaviors.
One way to capture the complexity of human driving behavior is to learn models of human-driven vehicles from real data.
We employ Gaussian process (GP) regression \citep{Rasmussen2006} to model human-driven vehicles crossing traffic-light-free intersections.
More specifically, we use for each intention of the human-driven vehicles -- turning right, turning left and going straight on -- an independent GP-based model.
This enables us to learn a ``standard'' behavior of human drivers
in form of mean predictions of the GP and capture variations via the uncertainty measure of the GP.
The learned models enable us to predict the future position of a human-driven vehicle.

To enable the autonomous vehicle to safely cross the intersection in mixed traffic, we employ model predictive control (MPC) for decision making \citep{Rawlings2019, Findeisen2002}.
The proposed MPC scheme uses multiple GP-based models for predicting the possible actions of the target vehicle.
Specifically, the safety distance between human-driven and autonomous vehicle is adapted based on the model uncertainty.
This uncertainty enables to capture deviations between the actual behavior of the human-driven vehicle and standard human driving behavior.
Such adaptation ensures that a minimum safety distance is maintained at all times, despite the uncertainty in the predicted behavior of the target vehicle.
However, the challenge that the intention of the human-driven vehicle is a-priori unknown remains -- thereby introducing additional uncertainty.
To overcome this challenge, we predict the future positions of the human-driven vehicle for all likely maneuvers and ensure an adapted safety distance for each maneuver using a multi-mode MPC approach.
Supporting the multi-mode MPC scheme with an online classification, we estimate the probability of a specific maneuver being taken and remove unlikely modes - \textit{intentions} - from the MPC problem to overcome conservatism and to improve performance.

In a similar way, \cite{bethge2020modelling} used neural networks instead of Gaussian processes, which does not enable to capture uncertainty due to different driving behaviors.
\cite{batkovic2020_pedestrians} proposed an approach to avoid pedestrian obstacles exploiting multiple modes and uncertainty bounds.
Contrary to this work, the pedestrians cross the street orthogonally to the driving direction at a known location and that the safety distance is not adapted online.
While there are many articles on obstacle avoidance \citep{soloperto2019collision}, communication design for autonomous vehicles \citep{di2019design}, and optimization of traffic using traffic light control \citep{deSchutter1998optimal}, these topics are not the focus of this paper.


In summary, we propose a new GP-supported control scheme for safe intersection crossing of autonomous vehicles in mixed traffic.
We show how to derive GP-based models of human-driven vehicles at (traffic-light-free) intersections and how to exploit such models in a multi-mode MPC scheme.
The approach is supported by an online classification approach to estimate the intention of the human-driven vehicle and further exploits these estimate for improved performance. 
We guarantee safety, i.e., (robust) constraint satisfaction and recursive feasibility, of the proposed scheme with high probability.
In contrast to other works, we consider all (likely) intentions with corresponding, independent predictions of the behavior of the target vehicle simultaneously in our approach.
This has the potential to reduce conservatism when compared to other approaches, e.g., waiting for all target vehicles to have left the intersection.

The remainder of this paper is structured as follows. We introduce multi-mode MPC in Sec. \ref{sec:MultiModeMPC}.
In Sec. \ref{sec:GaussianProcessRegression}, we outline GP regression, including the derivation of the prediction models for the human-driven vehicles.
We illustrate our approach using a simulation example in Sec. \ref{sec:SimulationExample} and draw conclusions in Sec. \ref{sec:Conclusions}.



\section{Learning-supported Multi-mode MPC} \label{sec:MultiModeMPC}

We start by outlining the autonomous system (ego) dynamics, followed by a description of the models of human-driven (target) vehicles.
We finish this section by presenting a learning-supported multi-mode MPC scheme and derive probabilistic safety guarantees.

\subsection{Autonomous Ego Vehicle}
We consider the nonlinear time-invariant system that describes the ego vehicle dynamics:
\begin{subequations}
    \label{eq:SysModel_ego}
\begin{align}
    & x(k+1) = f(x(k),\,u(k)), \label{eq:SysModel_ego:stateMapping}\\
    & y(k) = C x(k). \label{eq:SysModel_ego:outputMapping}
\end{align}
\end{subequations}
Here, $x \in \mathcal{X} \subseteq \mathbb{R}^{n_x}$ is the system state, $y \in \mathcal{Y} \subseteq \mathbb{R}^{n_y}$ the output and $u \in \mathcal{U} \subseteq \mathbb{R}^{n_u}$ the control input of our autonomous ego vehicle. The state transition map is $f:\, \mathbb{R}^{n_x} \times \mathbb{R}^{n_u} \rightarrow \mathbb{R}^{n_x}$ and $h:\,\mathbb{R}^{n_x} \times \mathbb{R}^{n_u} \rightarrow \mathbb{R}^{n_y}$ denotes the output mapping.
We aim to design a predictive controller with state and output reference $x_\text{ref}$ and $y_\text{ref}$, respectively.
\begin{assumption}
    The reference is trackable, i.e., $x_\text{ref}$ and $y_\text{ref}$ can be exactly followed simultaneously at all time steps.
\end{assumption}

To overcome the challenge of introducing an online trajectory planner, we propose to use a (speed-assigned) path as reference \citep{Faulwasser2009, Matschek2019}.
The path $\mathcal{P}:\, s \mapsto (x_\text{ref},y_\text{ref})$ is a geometric curve, parameterized by the path parameter $s \in \R{}$, and defined independently of the time.
The path is equipped with a timing law of the form $s(k+1) = \alpha(s,\,u_s), ~ s \in [0,\, \infty), ~ u_s \in [\underline{u}_s,\,\bar{u}_s]$, where $u_s$ is a virtual control input bounded by $\underline{u}_s$ and $\bar{u}_s$.
By a suitable choice of the timing law, forward motion along the path is ensured.
The virtual input $u_s$ is used as an additional degree-of-freedom in the control scheme to actively decide about the reference timing.
The speed-assignment equips the path with a desired reference velocity, which is not strict and can be adapted by the controller for the benefit of low position errors.


\subsection{Human-Driven Target Vehicle}

The human-driven vehicle follows an a-priori unknown intention $i \in \mathcal{I}$ with
\begin{equation}
    \mathcal{I} = \{ \text{turning right},\, \text{turning left},\, \text{straight on} \}. \label{eq:Intention_general}
\end{equation}
%
To model the target vehicle, we employ for each intention $i$ an independent model of the form
\begin{align}
    \xi_i(k+1) = g_i(\xi_i(k)), \, i \in \mathcal{I} ~,  \label{eq:HumanModel_general}
\end{align}
with the measurable output $\xi_i \in \mathbb{R}^{n_\xi}$ of the target vehicle for a specific intention $i \in \mathcal{I}$.
Due to the wide range of individual driving behaviors of humans, it is nearly impossible to obtain an accurate model for an individual driver.
Rather, \eqref{eq:HumanModel_general} is designed to model the average behavior of human drivers for each intention and to provide an uncertainty measure to capture deviations of individual behaviors from the standard behavior, see Sec.~\ref{sec:humandrivermodel}.

As the intention of the target vehicle is initially unknown, we employ an online classification approach to determine in each time step $k$ a set of \textit{likely intentions} $\tilde{\mathcal{I}}_k \subseteq \mathcal{I}$.
Similar to \cite{bethge2020modelling}, we compare the observed target vehicle behavior to prototypical paths (i.e., the average paths for each intention) to quantify the probability of a specific intention.\footnote{Any other online classification scheme can be used as long as it provides a realistic estimate for the probability of the modes being active.}
At the initial time step $k=0$, we set $\tilde{\mathcal{I}}_0 = \mathcal{I}$.
In any subsequent time step, the classification algorithm exploits a sequence $\{ \xi \}(k)$ of previous observations of the target vehicle to provide the probability of intention $i$ being active, which is $\mathrm{P}(i ~|~ \{ \xi \}(k))$.
If $\mathrm{P}(i ~|~ \{ \xi \}(k)) < \upright{\epsilon}{P}$ for a design threshold $\upright{\epsilon}{P} \in (0,1)$, we update $\tilde{\mathcal{I}}_k$ by $\tilde{\mathcal{I}}_{k+1} = \tilde{\mathcal{I}}_k \setminus \{ i \}$.
\begin{assumption}
\label{ass:ClassificationProbability}
    The probabilities $\mathrm{P}(i ~|~ \{ \xi \}(k))$ provided by the online classification are $\forall i\in {\mathcal{I}}\setminus \tilde{\mathcal{I}}$ upper bounds of the true probabilities and lower bounds $\forall i \in \tilde{\mathcal{I}}$.
\end{assumption}

In addition to Assumption~\ref{ass:ClassificationProbability}, and for simplicity of notation, we assume in the following that $y \in \R{n_y}$ and $\xi \in \R{n_y}$ represent two-dimensional Cartesian positions $(p_x, p_y)$.
%
\subsection{Multi-Mode Model Predictive Controller} \label{sec:MPC:subsec:Controller}


We develop a multi-mode MPC scheme to safely navigate the ego vehicle over the intersection.
Such safe crossing requires keeping at least a minimal safety distance $\upright{d}{safe}$ to the target vehicle which depends on the velocity of the ego vehicle.
MPC is an optimal control scheme that is based on the repeated solution of a constrained finite-horizon optimal control problem (OCP) \citep{Rawlings2019, Findeisen2002}.
It can be described by
\begin{subequations}
    \label{eq:OCP_MMConstrLearn}
    \begin{align}
        \min_{u,u_s}  & \left \{ \sum_{j=0}^{N-1} l(\hat{x}(j), \hat{y}(j), u(j),\hat{s}(j)) \! + \! E(\hat{x}(N), \hat{y}(N), \hat{s}(N)) \right \} \nonumber \\
        \text{s.t.} ~ & \forall j \in \{0, \ldots, N-1 \}: \nonumber \\
        & ~~ \hat{x}(j+1) = f(\hat{x}(j),\,u(j)),~ \hat{x}(0) = x(k) \\
        & ~~ \hat{y}(j) =  C\cdot \hat{x}(j)\\
        & ~~ \forall i \in \tilde{\mathcal{I}}_k: \hat{\xi}_i(j+1) = g_i(\hat{\xi}_i(j)),~ \hat{\xi}_i(0) = \xi_i(k) \label{eq:OCP_MMConstrLearn:GPpred}\\
        & ~~ \forall i \in \tilde{\mathcal{I}}_k: \hat{d}_i(j) \geq d_\text{ref}(j) \label{eq:OCP_MMConstrLearn:distance} \\
        & ~~ \hat{x}(j) \in \mathcal{X}, \hat{y}(j) \in \mathcal{Y},\, u(j) \in \mathcal{U}, \hat{x}(N) \in \mathcal{X}_f\\
        & ~~ \hat{s}(j+1) = \alpha(\hat{s}(j), u_s(j)),~ \hat{s}(0) = s(k) \\
        & ~~ u_s(j) \in [\underline{u}_s,\, \bar{u}_s],\, \hat{s}(j) \in [0,\,\infty).
    \end{align}
\end{subequations}
Therein, $N$ denotes the horizon length, $\hat{\cdot}$ denotes a prediction, $l(\cdot)$ and $E(\cdot)$ are the stage and terminal cost, respectively, $\mathcal{X}_f$ is the terminal region, $\hat{d}_i(j) = \lVert \hat{y}(j) - \hat{\xi}_i(j) \rVert_2$ is the predicted distance between ego and target vehicle, and $\upright{d}{ref}(j)$ is the required minimum distance between both vehicles.
The latter one is composed of the minimal safety distance $\upright{d}{safe}(j)$, depending on the ego vehicle's speed, and an additional contribution $d_{\sigma,i}(j)$ accounting for the uncertainty in the target vehicle model.
Choosing $d_{\sigma,i}(j)$ according to a high-probability confidence interval of the target vehicle model, the actual vehicle distance is with high probability larger than the minimum safety distance $\upright{d}{safe}$.
From the optimal input sequence $u^*$, which solves \eqref{eq:OCP_MMConstrLearn}, the first element is applied to the system throughout the next sampling period.
Note that the controller is supported by the online classification scheme and accounts for all remaining likely intentions of the target vehicle.



\begin{theorem}
  \label{thm:ProabilityGuarantees}
  Let the above-mentioned assumptions hold.
  For the predictions of the target vehicle behavior, assume the confidence interval $\mathrm{P}(\lvert \xi_i(k) - \hat{\xi}_i(k) \rvert \leq \nu \sigma_i(k)) \geq 1 - \omega$, where $\xi_i(k)$ is the actual target vehicle position, $\hat{\xi}_i(k)$ is the prediction, $\omega \in (0,1)$ is a design parameter, $\nu$ is chosen in accordance with the confidence level $1-\omega$ and $\sigma_i(k)$ is the standard deviation describing the model uncertainty for intention $i$ at time step $k$.
  Then, constraint satisfaction and recursive feasibility of OCP \eqref{eq:OCP_MMConstrLearn} at time $k$ are given with probability $\mathrm{P}_\text{cs} (k) \geq 1 - \omega - \sum_{j\in \mathcal{I}(k) \setminus \tilde{\mathcal{I}}} \mathrm{P}(j \mid \{ \xi \}(k))$.
\end{theorem}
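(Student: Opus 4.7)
The plan is to split the failure event at time step $k$ into two mechanisms and bound each separately, then use a union bound. Let $i^\star\in\mathcal{I}$ denote the true (but unknown) intention of the target vehicle. Event $\mathcal{E}_1=\{i^\star\in\mathcal{I}\setminus\tilde{\mathcal{I}}_k\}$ captures the case that the classifier has already removed the active mode from the OCP; by \Ass{ass:ClassificationProbability} the reported probabilities dominate the true ones on $\mathcal{I}\setminus\tilde{\mathcal{I}}$, so $\mathrm{P}(\mathcal{E}_1)\le \sum_{j\in\mathcal{I}\setminus\tilde{\mathcal{I}}}\mathrm{P}(j\mid\{\xi\}(k))$. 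Event $\mathcal{E}_2=\{\lVert \xi_{i^\star}(k)-\hat{\xi}_{i^\star}(k)\rVert > \nu\sigma_{i^\star}(k)\}$ is the failure of the GP confidence tube and, by the hypothesis of the theorem, satisfies $\mathrm{P}(\mathcal{E}_2)\le\omega$. A union bound yields $\mathrm{P}(\mathcal{E}_1\cup\mathcal{E}_2)\le \omega+\sum_{j\in\mathcal{I}\setminus\tilde{\mathcal{I}}}\mathrm{P}(j\mid\{\xi\}(k))$, which matches the complement of the claimed probability.

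It remains to argue that, on the complementary event $\overline{\mathcal{E}_1\cup\mathcal{E}_2}$, both the safety constraint and recursive feasibility hold deterministically. For the safety constraint, because $i^\star\in\tilde{\mathcal{I}}_k$ the OCP explicitly enforces \eqref{eq:OCP_MMConstrLearn:distance} for the true intention, so $\hat{d}_{i^\star}(j)\ge \upright{d}{safe}(j)+d_{\sigma,i^\star}(j)$ with $d_{\sigma,i^\star}(j)=\nu\sigma_{i^\star}(j)$. Applying the reverse triangle inequality $\lVert y(k+j)-\xi_{i^\star}(k+j)\rVert \ge \lVert y(k+j)-\hat{\xi}_{i^\star}(k+j)\rVert - \lVert \hat{\xi}_{i^\star}(k+j)-\xi_{i^\star}(k+j)\rVert$ and using the confidence tube guaranteed by $\overline{\mathcal{E}_2}$ then gives $\lVert y-\xi_{i^\star}\rVert \ge \upright{d}{safe}$, i.e., the original safety requirement is satisfied.

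For recursive feasibility I would use the standard shifted-candidate argument: the tail of the optimal input sequence at time $k$, appended with a terminal control law that keeps the state inside the invariant terminal region $\mathcal{X}_f$, provides a feasible candidate at $k{+}1$. The multi-mode distance constraints remain compatible because a new measurement can only prune the likely-intention set, i.e. $\tilde{\mathcal{I}}_{k+1}\subseteq\tilde{\mathcal{I}}_k$; hence any candidate certified against every $i\in\tilde{\mathcal{I}}_k$ is automatically certified against the smaller set $\tilde{\mathcal{I}}_{k+1}$. Combining this with the previous paragraph concludes the proof.

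The main obstacle is the recursive feasibility step: one must design the terminal ingredients $(\mathcal{X}_f,\,E)$ so that the invariance argument is compatible with the GP-based tightening $d_{\sigma,i}$, which typically grows with the prediction horizon and hence may erode the terminal margin. A secondary subtlety is that the bound is pointwise in $k$ and that Assumption~\ref{ass:ClassificationProbability} only yields conditional bounds on the true intention probabilities; a closed-loop statement would require a joint union bound across horizon steps and successive classifier updates, which is stronger than what is claimed here.
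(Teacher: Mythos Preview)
Your proposal is correct and follows essentially the same approach as the paper: decompose the failure event into (i) the true intention having been removed from $\tilde{\mathcal{I}}_k$ and (ii) the GP confidence region failing, bound each separately, and combine additively. If anything, you are more explicit than the paper, which does not spell out the reverse triangle inequality and simply defers recursive feasibility to standard MPC results via \citep{Rawlings2019}; your shifted-candidate argument together with the monotonicity $\tilde{\mathcal{I}}_{k+1}\subseteq\tilde{\mathcal{I}}_k$ makes that step concrete, and your closing caveats about terminal-ingredient compatibility with the horizon-dependent tightening $d_{\sigma,i}$ identify a genuine subtlety that the paper leaves implicit.
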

\begin{proof}
    The OCP \eqref{eq:OCP_MMConstrLearn} is a standard OCP, satisfying standard assumptions on MPC, except for \eqref{eq:OCP_MMConstrLearn:GPpred} and \eqref{eq:OCP_MMConstrLearn:distance}, 
    which introduce stochasticity due to the uncertain predictions of the target vehicle positions and the use of confidence-region-based constraint tightening.
    However, regarding all other constraints of the OCP \eqref{eq:OCP_MMConstrLearn}, standard guarantees as repeated feasibility and constraint satisfaction hold \citep{Rawlings2019}. 
    Given the $(1-\omega)$-confidence region for the predicted target vehicle positions, $d(k) \geq \upright{d}{safe}(k)$ is ensured with probability $1-\omega$ if $\hat{d}(k) \geq \upright{d}{safe}(k) + d_{\sigma}(k)$, where $d_{\sigma}(k)$ is chosen accordingly to the confidence region, c.f., \eqref{eq:OCP_MMConstrLearn:distance}.
    Violations of the minimal safety distance constraint between the ego and the vehicle might happen for two reasons: (i) the target vehicle behavior is not captured by the chosen high-probability confidence region obtained from the target vehicle model, or (ii) the true intention of the target vehicle is wrongly removed from the set of likely intentions and thus not considered by the controller anymore.
    For (i), each intention considered in the OCP (i.e., $\forall i \in \tilde{\mathcal{I}}_k$), the probability of constraint violation is by design given by $\omega$.
    As all $i \in \tilde{\mathcal{I}}_k$ are considered in \eqref{eq:OCP_MMConstrLearn:GPpred}, \eqref{eq:OCP_MMConstrLearn:distance}, i.e., they are considered equally likely by the controller, the total probability of constraint violation is not greater than $\omega$ if the true intention $i_*$ is for each time step $k$ contained in $\tilde{\mathcal{I}}_k$.
    If $i_* \notin \tilde{\mathcal{I}}_k$ for some time step $k$ (case (ii)), which happens with probability $\mathrm{P}(i_* ~|~ \{ \xi \}(k)) < \upright{\epsilon}{P}$ (Ass.~\ref{ass:ClassificationProbability}), the constraints are anyways not automatically violated and $\mathrm{P}(i_* ~|~ \{ \xi \}(k))$ is an upper bound for the probability of constraint violation in this case.
    Considering both cases, the probability for violating the constraints is given by $\mathrm{P}_{\text{cv}}(k) \leq \omega + \sum_{j\in \mathcal{I}(k) \setminus \tilde{\mathcal{I}}} \mathrm{P}(j ~|~ \{ \xi \}(k))$.
    Hence,  $\mathrm{P}_\text{cs} (k) \geq 1 - \mathrm{P}_{\text{cv}}(k)$ holds.
\end{proof}



\section{Gaussian Process Regression}
\label{sec:GaussianProcessRegression}

We first outline the basics of Gaussian process (GP) regression, followed by an overview of hyperparameter learning.
Thereafter, predictions at uncertain inputs are considered.
We finish by showing how we use GP models to predict human driving behavior in a model predictive controller for safe intersection crossing of an autonomous vehicle.

\subsection{Basics of Gaussian Process Regression}

A Gaussian process $\rho(\zeta) \sim \mathcal{GP}(m(\zeta), k(\zeta, \zeta')$ extends the concept of Gaussian probability distributions to random functions in order to model an uncertain function $\varphi: \R{d} \rightarrow \R{}$.
Formally, it is defined as \textit{a collection of random variables, any finite number of which have a joint Gaussian distribution} \citep{Rasmussen2006}.
The GP is fully specified through its prior mean function $m: \R{d} \rightarrow \R{}, \zeta \mapsto \mathrm{E}[\rho(\zeta)]$ and prior covariance function $\kappa: \R{d} \times \R{d} \rightarrow \R{}, (\zeta, \zeta') \mapsto \mathrm{Cov}[\rho(\zeta), \rho(\zeta')]$, which are design choices \citep{Rasmussen2006}. 

The objective is to learn the underlying function $\varphi$ in terms of inferring a predictive (posterior) distribution for so far unobserved function values (test targets) $\varphi_* = \varphi(Z_*)$, where $Z_* \in \R{n_* \times d}$ is a matrix collecting $n_*$ test inputs $\zeta_*^i \in \R{d}, i = 1, \ldots, n_*$.
To this end, we require a set of (noisy) observations (training targets) $\gamma = \varphi (Z) + \varepsilon$ at training inputs $\zeta^i \in \R{d}, i = 1, \ldots, n$, collected in $Z \in \R{n \times d}$, where $\varepsilon \sim \mathcal{N}(\Zero, \upright{\sigma}{n}^2 \I)$ models white Gaussian noise with variance $\upright{\sigma}{n}^2$.
Therein, $\Zero$ denotes the zero vector and $\I$ the identity matrix. 

By definition, the GP specifies the joint Gaussian distribution of training and test targets, which is the so-called \textit{joint prior distribution}.
Conditioning the joint prior distribution on the training targets yields the predictive (posterior) distribution $\varphi_* \mid Z_*, Z, \gamma \sim \mathcal{N}(m^+(Z_*), \kappa^+(Z_*))$ with
\begin{subequations}
    \begin{align}
        m^+(Z_*) & \! = \! m(Z_*) \! + \! \kappa(Z_*, Z) \kappa_{\gamma}^{-1} (\gamma  \! - \! m(Z)) \label{eq:postmean} \\
        \kappa^+(Z_*) & \! = \! \kappa(Z_*, Z_*) \! + \! \upright{\sigma}{n}^2 \I \! - \! \kappa(Z_*, Z) \kappa_{\gamma}^{-1} \kappa(Z, Z_*) ~, \label{eq:postcov}
    \end{align}
    \label{eq:postdist}%
\end{subequations}
where $\kappa_{\gamma} = \kappa(Z, Z) + \upright{\sigma}{n}^2 \I$.
The posterior mean \eqref{eq:postmean} is an estimate (prediction) for the unobserved function values $\phi_*$; the posterior variances -- the diagonal elements of \eqref{eq:postcov} -- quantify the prediction uncertainty \citep{Rasmussen2006}.

\subsection{Hyperparameter Learning}

The Gaussian process is fully defined via the prior mean function $m( \cdot ; \theta)$ and the prior covariance function $\kappa( \cdot, \cdot ; \theta)$, which usually depend on a set of free parameters $\theta$, the so-called \textit{hyperparameters}.
In consequence, a meaningful predictive distribution \eqref{eq:postdist} requires suitable hyperparameters that adapt the GP model to the underlying problem.

One way to obtain suitable hyperparameters is to \textit{learn} them from the training data.
To this end, we maximize the GP's capability of explaining the training observations, which is expressed by the logarithmic marginal likelihood
\begin{equation*}
    \log ( p(\gamma \mid Z, \theta) ) = -\frac{1}{2} \gamma_0\T \kappa_{\gamma}^{-1} \gamma_0 - \frac{1}{2} \log( \lvert \kappa_{\gamma} \rvert) - \frac{n}{2} \log(2 \pi),
\end{equation*}
where $p(\cdot)$ is a probability density function, $\lvert \cdot \rvert$ denotes the determinant, and $\gamma_0 = \gamma - m(Z)$.
The optimal hyperparameters are determined by maximizing the marginal likelihood w.r.t. the hyperparameters, e.g., by Newton-Raphson method.

\subsection{Prediction at Uncertain Inputs}

So far, we have considered predictions at deterministic test inputs $\zeta_*$.
However, in many applications as well as in the case of multiple-step predictions, the test inputs are subject to stochastic uncertainty.
To this end, we consider in the following predictions of function values $\varphi_* = \varphi(\zeta_*)$ at uncertain test inputs $\zeta_* \sim \mathcal{N}(\mu_*, \Sigma_*)$, where $\mathcal{N}(\mu_*, \Sigma_*)$ denotes a Gaussian distribution with mean vector $\mu_*$ and covariance matrix $\Sigma_*$.

The exact predictive distribution is then obtained via
\begin{equation*}
    p(\varphi(\zeta_*) \mid \mu_*, \Sigma_*, Z, \gamma) = \int p(\varphi(\zeta_*) \mid \zeta_*, Z, \gamma) p(\zeta_*) d \zeta_* ~,
\end{equation*}
where $p(\varphi(\zeta_*) \mid \zeta_*, Z, \gamma)$ denotes the posterior distribution \eqref{eq:postdist}.
Applying the law of iterated expectations (Fubini's theorem) and for a suitable choice of the covariance function $\kappa( \cdot , \cdot )$, the mean $\mu_{\varphi_*}$ and the variance $\sigma_{\varphi_*}^2$ of $p(\varphi(\zeta_*) \mid \mu_*, \Sigma_*, Z, \gamma)$ can be computed analytically as \citep{Deisenroth2010}
\begin{subequations}
    \begin{align}
        \mu_{\varphi_*} & = \mathrm{E}_{\zeta_*} [m^+(\zeta_*) \mid \mu_*, \Sigma_*] ~, \label{eq:postmean_uncertain} \\
        \sigma_{\varphi_*}^2 & = \mathrm{E}_{\zeta_*}[\kappa^+(\zeta_*) \mid \mu_*, \Sigma_*] + \mathrm{Var}_{\zeta_*}[m^+(\zeta) \mid \mu_*, \Sigma_*], \label{eq:postvar_uncertain}
    \end{align}
\end{subequations}
where conditioning on $\mu_*, \Sigma_*$ indicates the uncertainty of $\zeta_*$.
However, since $m^+(\cdot)$ and $\kappa^+(\cdot)$ are (in general) nonlinear functions, computing \eqref{eq:postmean_uncertain}, \eqref{eq:postvar_uncertain} is challenging.
Furthermore, the predictive distribution $p(\varphi(\zeta_*) \! \mid \! \mu_*, \Sigma_*, Z, \gamma)$ is non-Gaussian and analytically intractable when propagating an uncertain input through the nonlinear GP model.

We approximate the exact predictive distribution by linearizing the posterior equations \eqref{eq:postmean}, \eqref{eq:postcov} around $\mu_*$ and apply \eqref{eq:postmean_uncertain} and \eqref{eq:postvar_uncertain}, yielding
\begin{subequations}
    \begin{align}
        \tilde{\mu}_{\varphi_*} & = m^+(\mu_*), \\
        \tilde{\sigma}^2_{\varphi_*} & = \kappa^+(\mu_*) + \nabla_{\zeta_*}\T m^+(\mu_*) \Sigma_* \nabla_{\zeta_*} m^+(\mu_*).
    \end{align}
    \label{eq:lingpmodel}%
\end{subequations}
Propagating the uncertain input $\zeta_*$ through the linearized posterior GP model results then in an analytically tractable Gaussian distribution $\varphi_* \sim \mathcal{N}(\tilde{\mu}_{\varphi_*},  \tilde{\sigma}^2_{\varphi_*})$ \citep{Deisenroth2010, Hewing2017}.

\subsection{Gaussian Process Models of Human Drivers for MPC}
\label{sec:humandrivermodel}

We aim at using GP regression to build models of human driving behavior for intersection crossing.
To this end, we rely on demonstration data from human-driven vehicles turning right, turning left and going straight on over it.
Those three \textit{modes} of the human-driven vehicles are considered separately, such that we compute an independent GP model for each intention \citep{Nguyen2008, Nguyen2009}.
In the following, we show the general derivation of such a model for a single mode.

The objective is to derive a GP-based model for predicting the vehicle position $(p_x(t_i), p_y(t_i))$ with $ i = 1, \ldots, N$ along the MPC horizon given an initial position $(p_x(t_0), p_y(t_0))$.
We rely on a data set $\{ (\zeta_i = (p_x, p_y)_i, \gamma_{x,i} = v_{x,i}, \gamma_{y,i} = v_{y,i}  \mid i = 1, \ldots, n \}$ of two-dimensional Cartesian vehicle positions $(p_x, p_y)$ and corresponding directed velocities $(v_x, v_y)$.
Based thereon, we train two independent GPs to learn the velocity profiles $v_x = \chi_x(p_x, p_y)$ and $v_y = \chi_y(p_x, p_y)$.
The resulting posterior means represent the average velocities while the posterior variances capture variations in the demonstrations.
Given the two GP models of the directed velocities, we iterate the following procedure along the horizon for $i = 0, \ldots, N-1$:
\begin{enumerate}[label=(\roman*)]

    \item Given position $(p_{x,i}, p_{y,i}) \sim \mathcal{N}( \mu_{*,i}, \Sigma_{*,i})$, compute the predictions $v_{x,i} \sim \mathcal{N}(\tilde{\mu}_{x,i}, \tilde{\sigma}_{x,i}^2)$ and $v_{y,i} \sim \mathcal{N}(\tilde{\mu}_{y,i}, \tilde{\sigma}_{y,i}^2))$ according to \eqref{eq:lingpmodel}.\footnote{Note that for $i=0$ we have $\mu_{*,i} = \begin{bmatrix} p_{x,i} & p_{y,i} \end{bmatrix}\T$ and $\Sigma_{*, i} = \Zero$.}
    
    \item Compute the successor position, using the forward Euler integration scheme with sampling time $\upright{T}{s}$, according to $p_{x,i+1} = p_{x,i} + \upright{T}{s} v_{x,i}$, $p_{y,i+1} = p_{y,i} + \upright{T}{s} v_{y,i}$. Then, $(p_{x,i+1}, p_{y,i+1}) \sim \mathcal{N} \big( \mu_{*,i} + \upright{T}{s} \begin{bmatrix} \tilde{\mu}_{x,i} & \tilde{\mu}_{y,i} \end{bmatrix}\T, \Sigma_{*,i} + \upright{T}{s}^2 \mathrm{diag} \big( \begin{bmatrix} \tilde{\sigma}_{x,i}^2 & \tilde{\sigma}_{y,i}^2 \end{bmatrix} \big) \big)$.\footnote{Here, we neglect correlations between $(p_{x,i}, p_{y,i})$ and $v_{x,i}$ and $v_{y,i}$, respectively. As the GPs are independent by design, the covariance matrix of the predicted velocity is diagonal, denoted by $\mathrm{diag(\cdot)}$.}
    
    \item If $i < N-1$, increase $i$ by one and go to step (i).
    
\end{enumerate}

To compute confidence bounds on the predicted positions, we employ Chebyshev's inequality, yielding
\begin{align}
\begin{split}
    P( \lvert p_{x,i} - \mu_{x,*,i} \rvert \leq \nu \sigma_{x,*,i}) \geq 1 - \omega ~, \\
    P( \lvert p_{y,i} - \mu_{y,*,i} \rvert \leq \nu \sigma_{y,*,i}) \geq 1 - \omega ~,
    \end{split}
    \label{eq:chebyshev}%
\end{align}
with maximum probability of constraint violation $\omega \in (0,1)$ and $\nu = \sqrt{\omega^{-1}}$, see \citep{Olkin1958}.
Furthermore, the mean and variance have to be known exactly.
Chebyshev's inequality holds independently of the particular distribution law and tends to conservative estimates of the confidence intervals.
However, we exploit these properties to compensate for distribution mismatches due to the approximated multiple step predictions as well as to compensate for (accumulated) errors in the predicted means and variances.
Thereby, we achieve a certain robustness although it remains open to establish guarantees.

\section{Simulation Example}
\label{sec:SimulationExample}

We start by deriving the GP-based models of human-driving behavior, including a validation of the derived confidence bounds.
Thereafter, we outline the ego vehicle model used in this article and finish by presenting closed-loop simulation results.

\subsection{Modeling of Human Driving Behavior}
\label{subsec:HumanGPModel}

\begin{figure}[b!]
    \centering
    \input{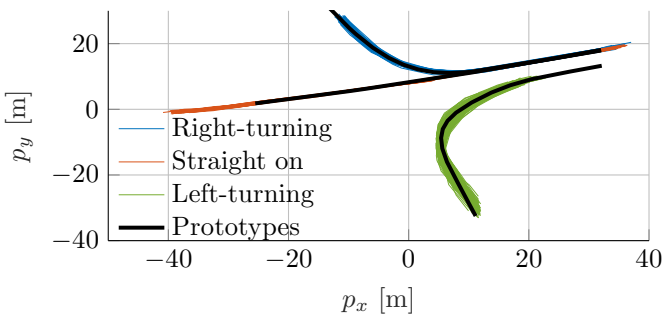}
    \caption{Trajectories of human-driven vehicles crossing the intersection. Left-turning vehicles approach the intersection on a separate lane. Black lines indicate the average paths.}
    \label{fig:intersection_data}
\end{figure}

Given $(p_x, p_y)$-trajectories, with sampling time $\upright{T}{s} = \SI{0.04}{\s}$, of human-driven vehicles crossing an intersection (Fig.~ \ref{fig:intersection_data}), we start by estimating the corresponding vehicle velocities using finite differences.
We define for each intention and each velocity profile a zero-mean GP prior with squared exponential covariance function (\cite{Rasmussen2006}), and train them on uniformly selected subsets of the data of the respective intention.
The selection of a smaller number of active training data points (compared to the number of available data points) reduces the computational complexity of the training as well as of the model evaluations.
We finally exploit the learned GPs and derive the models of human-driven vehicles as described in Sec. \ref{sec:humandrivermodel}.

Having derived these models, we validate them as well as the confidence bounds computed according to Chebyshev's inequality for each intention.
To this end, we choose a horizon length $N$ and assemble from each trajectory, belonging to the respective intention, all horizons that are completely contained in it.
We subsequently evaluate the models along each of the respective horizons and compute for each horizon (exemplarily) the 99\% confidence regions according to \eqref{eq:chebyshev}.
The confidence level is in turn estimated empirically as the relative frequency of trajectories, for which at least 99\% of the corresponding horizons are contained in the appendant confidence regions.
As increasing horizon lengths usually result in a stronger deterioration of the multiple-step prediction quality, we show the empirical confidence levels dependent on the horizon length and, exemplarily, for the right-turning vehicles in Tab. \ref{tab:confidencelevels}.
In accordance with the validation results for all modes, we select -- for the MPC -- a horizon length of $N=40$, for which the designed confidence bounds are exact.

\begin{table}[t]
    \centering
    \caption{Empirical confidence levels of the models for human-driven, right-turning vehicles based on 114 trajectories.}
    \begin{tabular}{c | c | c | c | c | c | c}
        $N$ & 38 & 39 & 40 & 41 & 42 & 43 \\ \hline
        $p_x$-confidence & 1 & 1 & 1 & 0.9912 & 0.9912 & 0.9912 \\ 
        $p_y$-confidence & 1 & 1 & 1 & 0.9912 & 0.9912 & 0.9912 \\ 
        Total confidence & 1 & 1 & 1 & 0.9825 & 0.9825 & 0.9825
    \end{tabular}
    \label{tab:confidencelevels}
\end{table}

\subsection{Ego Vehicle Model}

The ego vehicle's states 
\begin{equation}
    x = [p_x,\, p_y, \,v,\, \psi,\, \delta,\, v_s,\, s]^T
\end{equation}
contain the Cartesian x- and y-positions $p_x$ and $p_y$ respectively, the vehicle's speed $v$, its heading $\psi$, the steering rate $\delta$, the path velocity $v_s$ and the path parameter $s$, see \citep{Kong2015}.
The continuous-time dynamics are given by
\begin{equation}
    \dot{x} = [v \cos{\psi},\, v \sin{\psi},\, u_a, \, \frac{v}{L} \tan \delta,\,u_\delta,\, s,\, u_s]^T
\end{equation}
with vehicle length $L$ and control inputs
\begin{equation}
    u = [u_a,\, u_\delta, \,  u_s]^T,
\end{equation}
where $u_a$ is the acceleration, $u_\delta$ is the steering angle and $u_s$ denotes the virtual input for the path parameter.
The time-discrete state transition map \eqref{eq:SysModel_ego:stateMapping} is therefrom derived via discretization with sampling time $\upright{T}{s} = \SI{0.04}{\s}$.
The output in  \eqref{eq:SysModel_ego:outputMapping} is $y =  [p_x,\,p_y]^T$.


\subsection{Multi-Mode MPC for Safe Intersection Crossing}

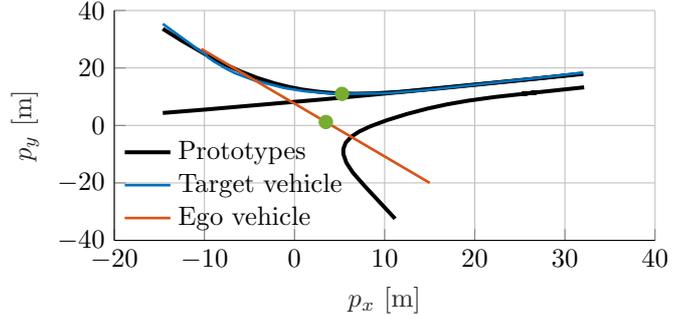
\begin{figure}[t!]
    \centering
%
%

\definecolor{mycolor1}{rgb}{0.00000,0.44700,0.74100}%
\definecolor{mycolor2}{rgb}{0.85000,0.32500,0.09800}%
\definecolor{mycolor3}{rgb}{0.46600,0.67400,0.18800}%

\begin{tikzpicture}

\begin{axis}[%
width=2.8in,
height=1.2in,
at={(0,0)},
scale only axis,
xmin=-20,
xmax=40,
xlabel style={font=\color{white!15!black}},
xlabel={$p_x\text{ [m]}$},
ymin=-40,
ymax=40,
ylabel style={font=\color{white!15!black}},
ylabel={$p_y\text{ [m]}$},
axis background/.style={fill=white},
axis x line*=bottom,
axis y line*=left,
legend style={at={(0,0)}, legend cell align=left, align=left, draw=white!15!black, fill = none,draw = none, anchor = south west},
xmajorgrids,
ymajorgrids
]
\addplot [color=black, line width=1.5pt]
  table[row sep=crcr]{%
32.1315085937367	13.2624914067001\\
25.2020378062598	10.9519621936043\\
26.8901898355265	11.5038101647939\\
22.2026338020206	9.88608981636563\\
19.2685984973297	8.61660753310046\\
16.9390519750409	7.34313897999831\\
14.790035809131	5.88914007051731\\
12.4820089536453	3.99364933804691\\
10.2005007564986	1.77138869095496\\
8.3962937742262	-0.32767065844611\\
7.1822245360344	-2.11611398308007\\
6.29400765877784	-3.93015841236672\\
5.67673067248676	-6.01514273261881\\
5.41315285842012	-8.1535749688128\\
5.41509403499038	-9.95702368307852\\
5.75333888086833	-12.9975298354997\\
6.47281127343352	-16.419263534608\\
7.75614142857081	-21.0053575088537\\
11.0525633075115	-32.1083120511414\\
11.1444603329137	-32.5004603328262\\
};
\addlegendentry{Prototypes}

\addplot [color=black, line width=1.5pt,forget plot]
  table[row sep=crcr]{%
32.0219999999972	18.081715754462\\
26.5520000000251	16.2412261629946\\
11.7320000000764	11.700429649578\\
8.97200000006706	11.1871964294128\\
6.66200000001118	11.0445912880069\\
4.44200000003912	11.2490762518456\\
2.86200000008103	11.6454082355027\\
0.902000000001863	12.4729149815381\\
-0.657999999937601	13.4238817328056\\
-2.24800000002142	14.677452873586\\
-4.10800000000745	16.519821828355\\
-5.92799999995623	18.7147306686648\\
-7.85800000000745	21.4488476845223\\
-9.87800000002608	24.7182911073883\\
-12.5180000000187	29.5103685103168\\
-14.6180000001824	33.6223258058304\\
};

\addplot [color=black, line width=1.5pt, forget plot]
  table[row sep=crcr]{%
32.0219999999972	17.9178629355463\\
16.3220000000438	12.9590241234935\\
3.37199999997392	9.12979117855119\\
-9.87800000002608	5.53591496650623\\
-14.6180000001824	4.34278058736598\\
};

\addplot [color=mycolor1, line width=1.0pt]
  table[row sep=crcr]{%
32.0219999999972	18.3720000004396\\
30.1419999999925	17.6919999998063\\
29.0520000000251	17.2420000005513\\
14.2020000000484	12.1919999998063\\
12.5420000000158	11.8020000001416\\
10.2820000000065	11.2220000000671\\
5.01199999998789	10.9620000002906\\
4.14199999999255	11.1220000004396\\
3.62199999997392	11.2420000005513\\
2.86200000008103	11.4620000002906\\
1.86200000008103	11.7120000002906\\
1.12199999997392	11.9120000004768\\
-1.12800000002608	13.0319999996573\\
-4.33799999998882	15.5719999996945\\
-4.71799999999348	15.9920000005513\\
-6.63799999991897	18.2319999998435\\
-7.57799999997951	19.7719999998808\\
-8.04799999995157	20.5820000004023\\
-8.16799999994691	20.8720000004396\\
-8.98800000001211	22.311999999918\\
-9.20799999998417	22.8720000004396\\
-9.61800000001676	23.7020000005141\\
-9.86800000001676	24.7420000005513\\
-11.0479999999516	27.0820000004023\\
-11.1080000000075	27.3720000004396\\
-14.6180000001824	35.3519999858318\\
};
\addlegendentry{Target vehicle}

\addplot [color=mycolor2, line width=1.0pt]
  table[row sep=crcr]{%
15	-20\\
-10.3524017786527	26.6481342931922\\
};
\addlegendentry{Ego vehicle}

\addplot [color=mycolor3, only marks, mark size=2.5pt, mark=*, mark options={solid, fill=mycolor3, mycolor3}]
  table[row sep=crcr]{%
5.26199999998789	10.9819999998435\\
};

\addplot [color=mycolor3, only marks, mark size=2.5pt, mark=*, mark options={solid, fill=mycolor3, mycolor3}, forget plot]
  table[row sep=crcr]{%
3.47591992120239	1.19812069376644\\
};
\end{axis}
\end{tikzpicture}%
    \caption{Scenario of the simulation example. Green points indicate the vehicle positions at the time at which the actual intention of the target vehicle was identified.}
    \label{fig:Scenario}
\end{figure}

\begin{figure}[b!]
    \centering
%
%

\definecolor{mycolor1}{rgb}{0.00000,0.44700,0.74100}%
\definecolor{mycolor2}{rgb}{0.85000,0.32500,0.09800}%
\definecolor{mycolor3}{rgb}{0.46600,0.67400,0.18800}%

\begin{tikzpicture}

\begin{axis}[%
width=2.8in,
height=1.2in,
at={(0,0)},
scale only axis,
xmin=0,
xmax=7.76,
xlabel style={font=\color{white!15!black}},
xlabel={$t$ [s]},
ymin=0,
ymax=1,
axis x line*=bottom,
axis y line*=left,
ylabel style={font=\color{white!15!black}},
ylabel={$P(i|\{\xi\}(k))$},
axis background/.style={fill=white},
legend style={legend cell align=left, align=left, draw=white!15!black,draw = none,at={(0.5,0.5)},anchor=west},
xmajorgrids,
ymajorgrids
]
\addplot [color=mycolor1, line width=1.5pt]
  table[row sep=crcr]{%
0	0.476537259214817\\
0.04	0.504887600884517\\
0.0800000000000001	0.519485244589284\\
0.16	0.533278944880511\\
0.36	0.548025534903508\\
0.68	0.560363866657161\\
0.8	0.565385765878001\\
1.04	0.585896783553478\\
1.44	0.645301107638981\\
1.48	0.643236146339662\\
1.52	0.63687229286949\\
1.6	0.605286824592753\\
1.84	0.48721426607182\\
1.92	0.458454877404674\\
2	0.437564779272719\\
2.08	0.422254103586493\\
2.16	0.420655447954202\\
2.24	0.424301751347685\\
2.32	0.420995904401183\\
2.4	0.411438345327548\\
2.48	0.393196239482033\\
2.56	0.369674506042836\\
2.64	0.357789296481825\\
2.72	0.373780450000837\\
2.8	0.427206217341473\\
2.92	0.560829591659191\\
3	0.650642060493055\\
3.08	0.72518467337313\\
3.16	0.782437048907672\\
3.2	0.803407824055378\\
3.28	0.833276848759726\\
3.36	0.851827723711528\\
3.44	0.863485669296669\\
3.68	0.87029551122209\\
4.16	0.860280570051876\\
4.4	0.845018751383267\\
4.64	0.835427551037719\\
4.84	0.833364037352708\\
5	0.83167552755984\\
5.36	0.832579792660468\\
5.6	0.84076090739274\\
5.76	0.850994709854852\\
5.96	0.870162307340507\\
6.28	0.906310825256245\\
6.72	0.966864006694211\\
6.84	0.971939999689444\\
7	0.968839867449494\\
7.52	0.929283639926915\\
7.76	0.918290819201022\\
};
\addlegendentry{Right-turning}

\addplot [color=mycolor2, line width=1.5pt]
  table[row sep=crcr]{%
0	0.422840826377894\\
0.0800000000000001	0.420306915328893\\
0.28	0.415391170946884\\
0.96	0.400282418025023\\
1.12	0.384779816393886\\
1.28	0.363381129436548\\
1.4	0.346001834118105\\
1.44	0.34224917744289\\
1.48	0.344929244500546\\
1.52	0.351846762929507\\
1.6	0.384005351600955\\
1.8	0.485259409249424\\
1.92	0.528295879264043\\
2	0.548218920378765\\
2.08	0.562713102434326\\
2.16	0.563748083150414\\
2.24	0.560046752417786\\
2.32	0.564002958334417\\
2.4	0.574526751878603\\
2.48	0.593954850280292\\
2.56	0.618675485166742\\
2.64	0.631437431575592\\
2.72	0.615557212060091\\
2.8	0.561378639481052\\
2.92	0.426106108061504\\
3	0.335759000521434\\
3.08	0.261337101602584\\
3.16	0.204605387133776\\
3.2	0.183860384121147\\
3.28	0.154200147053541\\
3.36	0.135518128599178\\
3.44	0.1234324300125\\
3.68	0.113276998576171\\
4.16	0.113600147252489\\
4.44	0.123075579255355\\
4.68	0.12462256523657\\
5	0.122545989151884\\
5.44	0.114377190892442\\
5.72	0.102448862054652\\
6.28	0.059662755412778\\
6.72	0.0203754506582925\\
6.88	0.0170149226978547\\
7.04	0.0207207494776203\\
7.56	0.0425470742899963\\
7.76	0.0473087114589994\\
};
\addlegendentry{Straight on}

\addplot [color=mycolor3, line width=1.5pt]
  table[row sep=crcr]{%
0	0.100621914407289\\
0.04	0.0722691926272709\\
0.0800000000000001	0.0602078400818229\\
0.16	0.0494209098616345\\
0.36	0.0378153730768958\\
0.76	0.026432767621845\\
1.6	0.0107078238062925\\
2	0.014216300348517\\
2.32	0.0150011372643997\\
2.88	0.0125556188920317\\
3.16	0.0129575639585511\\
3.52	0.0142333215423998\\
4.32	0.0307245980278568\\
4.92	0.0444487871340531\\
5.36	0.0505865549156042\\
5.76	0.0494042690521699\\
6.12	0.0403423625439485\\
6.92	0.0111098254686777\\
7.16	0.0179599200150786\\
7.64	0.0320364773637287\\
7.76	0.0344004693399791\\
};
\addlegendentry{Left-turning}

\addplot [color=black, line width=1.5pt]
  table[row sep=crcr]{%
0	0.15\\
7.76	0.15\\
};
\addlegendentry{$\epsilon_{\mathrm{P}}$}

\end{axis}


\end{tikzpicture}
    \caption{Online classification results. Modes with probability below $\upright{\epsilon}{P}$ are considered inactive.}
    \label{fig:classification}
\end{figure}
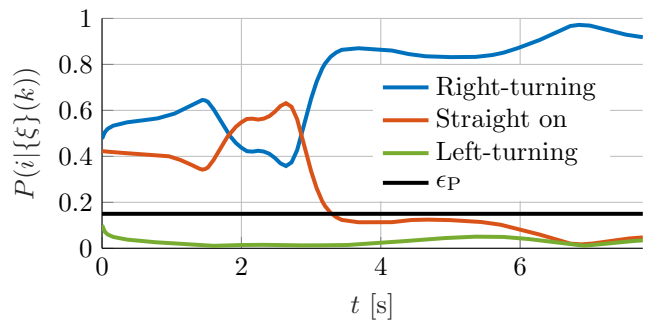

To illustrate our approach, we select a specific trajectory from the data set to represent the target vehicle. 
We depict the situation in Fig.~\ref{fig:Scenario}.
The target vehicle starts east and takes a right turn, while the ego vehicle starts south and drives straight on over the intersection.

We employ the control scheme presented in Sec. \ref{sec:MultiModeMPC} to safely cross the intersection with the ego vehicle.
Initially, the intention of the target vehicle is unknown to the ego vehicle.
Consequently, the ego vehicle needs to  consider all possible intentions of the target vehicle.
However, as the target vehicle progresses, we continuously employ the online classification algorithm, c.f., Sec. \ref{sec:MultiModeMPC}, with $\epsilon_{\mathrm{P}} = 0.15$, to estimate the actual intention of the target vehicle.
In Fig.~\ref{fig:classification}, we show the result of the classification algorithm at each sampling time point.
Since vehicles that turn left already start on another lane, this intention immediately falls below the threshold $\epsilon_{\mathrm{P}}$ and is thus considered inactive.
However, going straight on or turning right are still possible.
After the target vehicle starts turning right, i.e., after about $\SI{3.4}{\second}$, we can identify the intention of the target vehicle.
The corresponding positions of the vehicles at that time are marked by green points in Fig.~\ref{fig:Scenario}. 

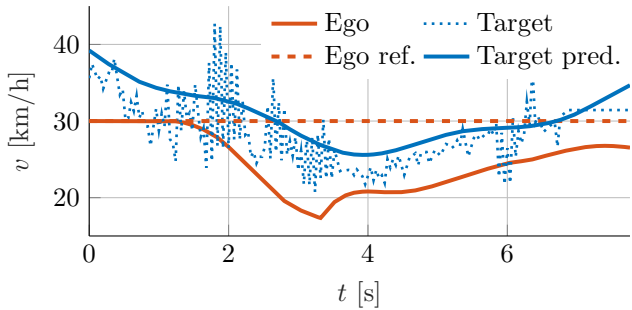
\begin{figure}
    \centering
%
%

\definecolor{mycolor1}{rgb}{0.00000,0.44700,0.74100}%
\definecolor{mycolor2}{rgb}{0.85000,0.32500,0.09800}%

\begin{tikzpicture}

\begin{axis}[%
width=2.8in,
height=1.2in,
at={(0,0)},
scale only axis,
xmin=0,
xmax=7.76,
xlabel style={font=\color{white!15!black}},
xlabel={$t$ [s]},
ymin=15,
ymax=45,
ylabel style={font=\color{white!15!black}},
ylabel={$v$ [km/h]},
axis background/.style={fill=white},
axis x line*=bottom,
axis y line*=left,
legend style={at={(1.01,1.02)}, legend cell align=left, align=left, draw=white!15!black, draw = none, anchor = north east, legend columns=2},
xmajorgrids,
ymajorgrids
]
\addplot [color=mycolor2, line width=1.5pt]
  table[row sep=crcr]{%
0	30\\
1.24	29.9677114745022\\
1.36	29.7932587020773\\
1.48	29.4838867568397\\
1.64	28.8901158093529\\
1.84	27.8515342767123\\
2	26.6123979385631\\
2.8	19.566134725041\\
3.04	18.3190886715764\\
3.28	17.4593041852108\\
3.32	17.3499461503297\\
3.52	19.428213135002\\
3.68	20.301294753817\\
3.8	20.6429483463218\\
3.92	20.7814153569722\\
4	20.8100873325884\\
4.12	20.7711693778175\\
4.24	20.7138117920202\\
4.48	20.7219741301161\\
4.68	20.9460772380479\\
4.92	21.4783682962718\\
5.32	22.6205323682592\\
5.72	23.7758225290079\\
6.04	24.558044376936\\
6.2	24.7559685655845\\
6.36	24.994037464748\\
6.56	25.4913266644502\\
6.84	26.1318694268332\\
7.12	26.5992331123404\\
7.28	26.7361653153854\\
7.4	26.7654793310484\\
7.52	26.7367222644061\\
7.68	26.6196380227351\\
7.76	26.5324862566058\\
};
\addlegendentry{Ego}

\addplot [color=mycolor1, dotted, line width=1.0pt]
  table[row sep=crcr]{%
0	35.6040728087502\\
0.0399999999999991	37.466384930524\\
0.119999999999997	35.4558880899899\\
0.159999999999997	35.8647459239539\\
0.200000000000003	34.2118400493773\\
0.280000000000001	36.2801598382212\\
0.32	36.1459541604997\\
0.359999999999999	38.1413161719278\\
0.439999999999998	33.4456275270307\\
0.479999999999997	34.152598731061\\
0.520000000000003	30.1734983005736\\
0.560000000000002	30.3608300293926\\
0.600000000000001	32.4499614790488\\
0.640000000000001	31.3065488300759\\
0.68	33.3000000005531\\
0.719999999999999	30.4540637730244\\
0.759999999999998	31.3065488243812\\
0.799999999999997	29.3156954644388\\
0.840000000000003	29.3156954486775\\
0.880000000000003	26.7589237540051\\
0.960000000000001	31.6026897638942\\
1	28.7577815693017\\
1.04	30.9945156164148\\
1.08	29.3156954744102\\
1.12	34.7404663016668\\
1.16	27.6081509892819\\
1.2	27.0748961160085\\
1.24	24.7622696818204\\
1.28	34.7404663016668\\
1.32	28.7577815745327\\
1.36	33.8787544113694\\
1.4	30.4540637457765\\
1.44	31.0858488714786\\
1.48	28.7577815693017\\
1.52	34.763774235623\\
1.6	27.9145123381879\\
1.64	29.0520223005032\\
1.68	23.9135944833821\\
1.72	37.6066483306962\\
1.76	24.7622696818204\\
1.8	42.6907484179678\\
1.84	26.7589237356807\\
1.88	42.4147380256261\\
1.92	26.4697940934724\\
1.96	38.991922239649\\
2	26.4697941034472\\
2.04	37.5958774369127\\
2.08	27.6081509793768\\
2.12	36.7129405004365\\
2.16	27.6081509619576\\
2.2	32.7110073174297\\
2.24	28.1888275635682\\
2.28	28.6024474579164\\
2.32	25.3442301201609\\
2.36	25.9755654497691\\
2.4	24.4826469079485\\
2.44	25.3442300917388\\
2.48	25.3442301155508\\
2.52	26.652767223141\\
2.56	31.9595056192221\\
2.6	25.1033862335725\\
2.64	36.1011080258605\\
2.68	25.9755654192754\\
2.72	32.4124975976149\\
2.76	24.8927700357384\\
2.8	31.615502536116\\
2.84	25.5986327646457\\
2.88	27.3724313871225\\
2.92	25.4558441251235\\
2.96	27.8999999990178\\
3	23.555254193953\\
3.04	26.1155126334719\\
3.08	21.7680959116138\\
3.12	27.9580042173753\\
3.16	21.6748702371765\\
3.2	27.9000000044201\\
3.24	20.7781134867596\\
3.28	28.2605732442724\\
3.32	22.5718851675767\\
3.36	30.1734982905731\\
3.4	22.5179928044587\\
3.44	27.7252592509198\\
3.48	22.5179928044587\\
3.52	26.2392835391685\\
3.6	23.4691286519493\\
3.64	23.6239285529572\\
3.68	23.555254193953\\
3.72	23.6239285231141\\
3.76	23.555254193953\\
3.8	22.9455878221735\\
3.84	23.6753035891601\\
3.92	22.7861800221422\\
3.96	21.7494827397887\\
4.04	23.9643485365365\\
4.08	22.2647703680914\\
4.12	23.6239285470634\\
4.16	22.5000000062144\\
4.2	22.9102596868291\\
4.24	21.9164321915358\\
4.28	25.0225897956787\\
4.32	23.9135944593409\\
4.36	25.022589838639\\
4.4	24.2332416191245\\
4.44	25.4876440870123\\
4.52	22.9455878102557\\
4.6	25.3602050412957\\
4.64	25.4876440705603\\
4.68	25.7877878330578\\
4.72	24.2165232304373\\
4.76	25.7877878330578\\
4.8	25.9599691313977\\
4.84	25.9599691895158\\
4.88	24.4826469657219\\
4.96	26.2392835320821\\
5.04	25.4876440588947\\
5.12	27.0000000119835\\
5.16	26.8495809705445\\
5.2	26.469794123753\\
5.24	26.7589237540051\\
5.28	27.4758075242581\\
5.32	26.7286363283885\\
5.36	28.9822359415216\\
5.44	25.3602050461295\\
5.52	27.7982013459057\\
5.56	28.2462386899766\\
5.6	27.5494101690745\\
5.68	28.2605732389337\\
5.72	28.188827559554\\
5.76	28.9822359864212\\
5.8	28.2462386899766\\
5.84	29.7136332041784\\
5.88	22.7149730328351\\
5.92	28.4604989210148\\
5.96	21.7494828100249\\
6	26.4697940553546\\
6.04	21.8979451289897\\
6.08	30.4540637841712\\
6.12	25.3442301356421\\
6.16	28.7155010077152\\
6.2	27.0748961233224\\
6.24	28.246238726031\\
6.28	32.524759765355\\
6.32	29.4810108642244\\
6.36	36.9109739905596\\
6.4	29.8088912162732\\
6.44	30.600000059879\\
6.52	28.5882842608519\\
6.64	29.6043915356786\\
6.68	29.79530170369\\
6.72	26.6527672146499\\
6.76	30.6132323992885\\
6.88	31.435648501431\\
7.76	31.435648501431\\
};
\addlegendentry{Target}

\addplot [color=mycolor2, dashed, line width=1.5pt]
  table[row sep=crcr]{%
0	30\\
7.76	30\\
};
\addlegendentry{Ego ref.}

\addplot [color=mycolor1, line width=1.5pt]
  table[row sep=crcr]{%
0	39.2099696834471\\
0.280000000000001	37.4307102018056\\
0.719999999999999	35.1269156958976\\
0.960000000000001	34.339533750954\\
1.16	33.8813157823165\\
1.52	33.2977891313239\\
1.8	32.9982923088119\\
2.12	32.248727457322\\
2.28	31.6549727771743\\
2.72	29.8473466049893\\
2.88	28.9241997656765\\
3.04	28.0921201095565\\
3.32	26.8666615354504\\
3.64	25.8661441881713\\
3.8	25.634312789206\\
3.92	25.5818510196448\\
4	25.5908347094813\\
4.08	25.6442386994714\\
4.2	25.7842055472447\\
4.4	26.1901408772177\\
4.64	26.8479913653001\\
5.12	28.1971321513528\\
5.36	28.6732546695609\\
5.64	28.9519025463731\\
5.96	29.1092046158953\\
6.16	29.1938678097941\\
6.32	29.3272593429295\\
6.4	29.4430846719425\\
6.6	29.7564971830105\\
6.76	30.1444934284934\\
6.96	30.830301672542\\
7.2	31.9355181321329\\
7.76	34.6756039694246\\
};
\addlegendentry{Target pred.}

\end{axis}

\end{tikzpicture}%
    \caption{Actual velocity profiles of the ego and target vehicle and GP-based prediction of the target vehicle speed. Additionally, the velocity reference for the ego vehicle is shown.}
    \label{fig:velocities}
\end{figure}

In order to keep the desired minimal safety distance\footnote{The desired minimal safety distance is dependent on the ego vehicle speed and chosen as the distance that is covered by the ego vehicle in $\SI{1}{\s}$.}, the ego vehicle decelerates such that the target vehicle can pass the intersection first (Fig. \ref{fig:velocities}). 
Afterwards, the ego vehicle accelerates again while keeping the desired safety distance robustly. 
Note that the velocity of the target vehicle is noisy since it is based on an observed trajectory in the data set and thus includes measurement noise.

Finally, we depict in Fig.~\ref{fig:distances} the distance between target and ego vehicle during intersection crossing in more detail.
Therein, the actual distance (red line) between ego and target vehicle and the velocity-dependent, minimal desired safety distance $\upright{d}{safe}$ (black line) are shown.
Moreover, for every $20^{\text{th}}$ iteration, we show the predicted minimal safety distances (black dotted lines) according to the planned ego vehicle speeds, the predicted distances (blue dashed lines) and the tightening (blue dotted lines) in \eqref{eq:OCP_MMConstrLearn:distance} by $d_{\sigma,i}$ over the controller's prediction horizon.
We observe that the predicted distances always fulfill the corresponding tightened distance constraints, indicating that the ego vehicle always keeps enough space to account for possible prediction errors of the target vehicle's behavior.

\begin{figure}
    \centering
    \input{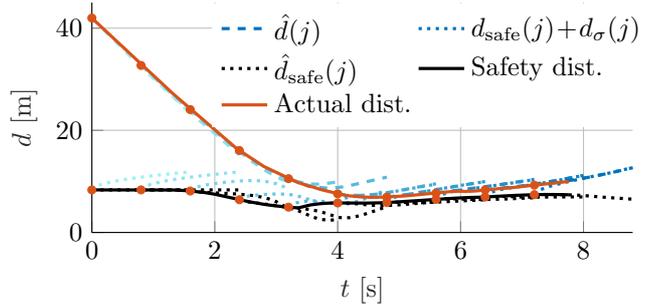}%
    \caption{Actual and velocity-dependent, minimal safety distance between ego and target vehicle. At selected times, the minimal safety distances corresponding to the planned ego vehicle speeds ($\hat{d}_{\mathrm{safe}}$), the predicted distances ($\hat{d}$) and the tightened distance constraints ($d_{\mathrm{safe}}+d_{\sigma}$) are shown over the corresponding prediction horizon of length $N$.}
    \label{fig:distances}
\end{figure}



\section{Conclusions}
\label{sec:Conclusions}

We have presented a multi-mode MPC scheme for safe intersection crossing of autonomous vehicles in the presence of human-driven vehicles.
The proposed MPC scheme is supported by learned, GP-based models of human drivers and an online classification algorithm to determine the human's a-priori unknown intention.
We have described the derivation of such GP-based models as well as their embedding in the multi-mode prediction approach used in the MPC scheme.
Furthermore, we have derived probabilistic safety guarantees for the proposed scheme, combining the multi-mode predictions and GP-based confidence sets of human driving behavior.  
The proposed approach has been illustrated in a simulation example using real-world traffic data.

Future steps will focus on computationally efficient implementations of the GP evaluations to achieve real-time capability of the proposed approach.
Furthermore, we aim to extend the presented approach to consider multiple human-driven vehicles crossing the intersection and investigate the transferability to other arbitrary intersections.
Finally, we will focus on employing more sophisticated modeling approaches such as structured output GPs or intention-driven dynamics models based on GPs.



\bibliography{root}           

\begin{thebibliography}{16}
\providecommand{\natexlab}[1]{#1}
\providecommand{\url}[1]{\texttt{#1}}
\providecommand{\urlprefix}{URL }
\expandafter\ifx\csname urlstyle\endcsname\relax
  \providecommand{\doi}[1]{doi:\discretionary{}{}{}#1}\else
  \providecommand{\doi}{doi:\discretionary{}{}{}\begingroup
  \urlstyle{rm}\Url}\fi

\bibitem[{Batkovic et~al.(2020)Batkovic, Rosolia, Zanon, and
  Falcone}]{batkovic2020_pedestrians}
Batkovic, I., Rosolia, U., Zanon, M., and Falcone, P. (2020).
\newblock A robust scenario {MPC} approach for uncertain multi-modal obstacles.
\newblock \emph{IEEE Control Systems Letters}, 5, 947--952.

\bibitem[{Bethge et~al.(2020)Bethge, Morabito, Rewald, Ahsan, Sorgatz, and
  Findeisen}]{bethge2020modelling}
Bethge, J., Morabito, B., Rewald, H., Ahsan, A., Sorgatz, S., and Findeisen, R.
  (2020).
\newblock Modelling human driving behavior for constrained model predictive
  control in mixed traffic at intersections.
\newblock \emph{IFAC-PapersOnLine}, 53(2), 14356--14362.

\bibitem[{De~Schutter and De~Moor(1998)}]{deSchutter1998optimal}
De~Schutter, B. and De~Moor, B. (1998).
\newblock Optimal traffic light control for a single intersection.
\newblock \emph{European Journal of Control}, 4(3), 260--276.

\bibitem[{Deisenroth(2010)}]{Deisenroth2010}
Deisenroth, M.P. (2010).
\newblock \emph{Efficient Reinforcement Learning using {G}aussian processes}.
\newblock Ph.D. thesis, Intelligent Sensor-Actuator-Systems Laboratory,
  Karlsruhe Institute of Technology.

\bibitem[{Di~Vaio et~al.(2019)Di~Vaio, Falcone, Hult, Petrillo, Salvi, and
  Santini}]{di2019design}
Di~Vaio, M., Falcone, P., Hult, R., Petrillo, A., Salvi, A., and Santini, S.
  (2019).
\newblock Design and experimental validation of a distributed interaction
  protocol for connected autonomous vehicles at a road intersection.
\newblock \emph{IEEE Transactions on Vehicular Technology}, 68(10), 9451--9465.

\bibitem[{Faulwasser et~al.(2009)Faulwasser, Kern, and
  Findeisen}]{Faulwasser2009}
Faulwasser, T., Kern, B., and Findeisen, R. (2009).
\newblock Model predictive path-following for constrained nonlinear systems.
\newblock In \emph{Proc. IEEE Conf. Dec. Cont.}, 8642--8647.

\bibitem[{Findeisen and Allgöwer(2002)}]{Findeisen2002}
Findeisen, R. and Allgöwer, F. (2002).
\newblock An introduction to nonlinear model predictive control.
\newblock In \emph{21st Benelux Meeting on Systems and Control}, volume~11,
  119--141.

\bibitem[{Hewing et~al.(2017)Hewing, Kabzan, and Zeilinger}]{Hewing2017}
Hewing, L., Kabzan, J., and Zeilinger, M.N. (2017).
\newblock Cautious model predictive control using {G}aussian process
  regression.
\newblock \emph{arXiv:1705.10702v4}.

\bibitem[{Kong et~al.(2015)Kong, Pfeiffer, Schildbach, and Borrelli}]{Kong2015}
Kong, J., Pfeiffer, M., Schildbach, G., and Borrelli, F. (2015).
\newblock Kinematic and dynamic vehicle models for autonomous driving control
  design.
\newblock In \emph{{IEEE} Intelligent Vehicles Symposium}, 1094--1099.

\bibitem[{Matschek et~al.(2019)Matschek, Bäthge, Faulwasser, and
  Findeisen}]{Matschek2019}
Matschek, J., Bäthge, T., Faulwasser, T., and Findeisen, R. (2019).
\newblock \emph{Nonlinear Predictive Control for Trajectory Tracking and Path
  Following: An Introductino and Perspective}, 169--198.
\newblock Springer International Publishing.

\bibitem[{Nguyen-Tuong et~al.(2008)Nguyen-Tuong, Seeger, and
  Peters}]{Nguyen2008}
Nguyen-Tuong, D., Seeger, M., and Peters, J. (2008).
\newblock Local {G}aussian process regression for real time online model
  learning.
\newblock In \emph{Advances in Neural Information Processing Systems},
  1193--1200.

\bibitem[{Nguyen-Tuong et~al.(2009)Nguyen-Tuong, Seeger, and
  Peters}]{Nguyen2009}
Nguyen-Tuong, D., Seeger, M., and Peters, J. (2009).
\newblock Model learning with local {G}aussian regression.
\newblock \emph{Advanced Robotics}, 23, 2015--2034.

\bibitem[{Olkin and Pratt(1958)}]{Olkin1958}
Olkin, I. and Pratt, J. (1958).
\newblock A multivariate {T}chebycheff inequality.
\newblock \emph{The Annals of Mathematical Statistics}, 29, 226--234.

\bibitem[{Rasmussen and Williams(2006)}]{Rasmussen2006}
Rasmussen, C.E. and Williams, C.K.I. (2006).
\newblock \emph{Gaussian Processes for Machine Learning}.
\newblock The MIT Press.

\bibitem[{Rawlings et~al.(2019)Rawlings, Mayne, and Diehl}]{Rawlings2019}
Rawlings, J.B., Mayne, D.Q., and Diehl, M.M. (2019).
\newblock \emph{Model Predictive Control: Theory, Computation, and Design}.
\newblock Nob Hill Publishing, LLC, 2 edition.

\bibitem[{Soloperto et~al.(2019)Soloperto, K{\"o}hler, Allg{\"o}wer, and
  M{\"u}ller}]{soloperto2019collision}
Soloperto, R., K{\"o}hler, J., Allg{\"o}wer, F., and M{\"u}ller, M.A. (2019).
\newblock Collision avoidance for uncertain nonlinear systems with moving
  obstacles using robust model predictive control.
\newblock In \emph{2019 18th European Control Conference (ECC)}, 811--817.
  IEEE.

\end{thebibliography}


\end{document}